\documentclass[conference]{IEEEtran}
\IEEEoverridecommandlockouts

\usepackage{cite}
\usepackage{amsmath,amssymb,amsfonts}
\usepackage{algorithmic}
\usepackage{graphicx}
\usepackage{textcomp}
\usepackage{xcolor}
\usepackage[hyphens]{url}

\usepackage{myresearchpack}
\usepackage{enumitem}
\usepackage{tabularx}
\usepackage{multirow}
\usepackage{subcaption}
\usepackage{setspace}
\usepackage{thmtools, thm-restate}

\usepackage{amsthm}
\newtheorem{theorem}{Theorem}[section]
\newtheorem{definition}{Definition}[section]
\usepackage{amsmath}
\usepackage{listings}
\usepackage{color}

\def\BibTeX{{\rm B\kern-.05em{\sc i\kern-.025em b}\kern-.08em
    T\kern-.1667em\lower.7ex\hbox{E}\kern-.125emX}}
\begin{document}

\pdfpagewidth=8.5in
\pdfpageheight=11in

\newcommand{\iscasubmissionnumber}{306}

\pagenumbering{arabic}
\shortname{DataGuard}
\def\Xe{DataGuard\textsubscript{\texttt{ex}}\xspace}
\title{\X: Guaranteeing Private Training in Systolic-array Based Accelerators}
\author{
\IEEEauthorblockN{
Pawan Kumar Sanjaya\IEEEauthorrefmark{1},
Christina Giannoula\IEEEauthorrefmark{2},
Nikhil Shreekumar,
Ian Colbert\IEEEauthorrefmark{3},
Alec Dewulf\IEEEauthorrefmark{1},
Mehdi Saeedi\IEEEauthorrefmark{3}, \\
Ihab Amer,
Gabor Sines\IEEEauthorrefmark{3},
Nandita Vijaykumar\IEEEauthorrefmark{1}
}
\IEEEauthorblockA{\IEEEauthorrefmark{1}University of Toronto}
\IEEEauthorblockA{\IEEEauthorrefmark{2}Max Planck Institute for Software Systems}
\IEEEauthorblockA{\IEEEauthorrefmark{3}Advanced Micro Devices (AMD)}
\thanks{This work was partially conducted while Christina Giannoula and Nikhil Shreekumar were with the University of Toronto, and while Ihab Amer was with AMD.} 
\thanks{© 2026 Advanced Micro Devices, Inc. All rights reserved. AMD, the AMD Arrow logo, Radeon, and combinations thereof are trademarks of Advanced Micro Devices, Inc. Other product names used in this publication are for identification purposes only and may be trademarks of their respective companies.}
}

\maketitle
\thispagestyle{plain}
\pagestyle{plain}


\begin{abstract}

Differential privacy (DP) and federated learning (FL) have emerged as important privacy-preserving approaches when using sensitive data to train machine learning (ML) models. FL ensures that raw sensitive data does not leave the users' devices by training the model locally on the device. DP ensures that the model does not leak any information about an individual by \emph{clipping} and adding \emph{noise} to the gradients before updating the model. It provides formalism to constrain privacy loss during training to a \emph{privacy budget} determined \textit{a priori} by the owner of sensitive data. However, real-life deployments of FL algorithms typically assume that a third-party FL application can be trusted to correctly implement DP algorithms. Thus, the third-party application is given full access to sensitive data. 
In this work, we propose \X, a hardware-based mechanism that guarantees that the only data that can leave the device is the result of computation that meets DP requirements. \X can thus be used to ensure that the privacy budget defined by the data owner is not exceeded during FL training without the need to trust a third-party application. 
We evaluate \X in simulations of four accelerators for various ML models and demonstrate only small area overheads of less than 0.01\% and performance slowdowns of less than 0.3\%. 

\end{abstract}

\section{Introduction}
\label{introduction}

Protecting the privacy of sensitive user data in machine learning (ML) training is of critical importance in many domains such as healthcare and finance. For example, training medical diagnosis models requires sensitive patient records from multiple hospitals~\cite{cvd,multi,fedmed,med}. In the financial sector, sensitive user data from multiple institutions is used to train models for fraud detection~\cite{fedfinance} and credit scoring~\cite{li_dynamic_2024}. Models for text prediction~\cite{mkp}, facial recognition~\cite{fedface},
and others~\cite{fedinfo,fedhar,fedrecon,fedspeech,amfed} use sensitive user data collected on mobile phones. 

Federated learning (FL)~\cite{fl} and differential privacy (DP)~\cite{dp} protect user privacy when training ML models on sensitive data. In a typical FL system (Fig.~\ref{fig:fl}), a shared model is trained without the sensitive training data (\circled{a}) ever leaving the data owners' (\ie clients') devices (\circled{b}), \eg local servers or mobile devices. Instead, in each round, the training application (\circled{c}) trains a copy of the shared model on the client device and shares only the weight gradients with a central server (\circled{d}) for aggregation. By keeping data where it is owned, FL not only protects data privacy but also facilitates compliance with regulations such as GDPR~\cite{gdpr} that strictly restrict data transfers and storage locations.

\begin{figure}
    \centering
    \includegraphics[scale=0.75,trim={5mm 2.5mm 10mm 0.5mm}]{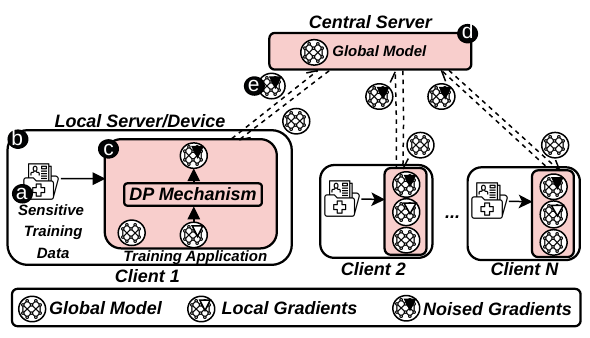}    
    \caption{Overview of DP-enabled Federated Learning.}
    \label{fig:fl}
    \vspace{-0.5cm}
\end{figure}

FL alone does not guarantee privacy as information about sensitive training data can be extracted from trained models~\cite{attack1,attack3}. DP~\cite{dp} can be used with FL to provide formal guarantees~\cite{dpsgd,dp2,llm-dp2}. DP is a mathematical framework to protect the privacy of an individual when their data is used in an analysis of a larger dataset. It has proven effective for statistical analyses~\cite{dpus,dpus2,rappor,apple2017differential} and more recently, for privacy-preserving ML on sensitive data~\cite{dpsgd,dp2,ldp2,ldp-fed,llm-dp1,llm-dp2,dpadapt,dpfyml,med,fedmed,fedfinance,shuffledp,bayesiandp}. The common approach integrates DP into SGD~\cite{dpsgd} by clipping gradients to a threshold and adding Gaussian noise before updating model weights. In DP-enabled FL, only noised gradients (\circled{e}) are sent to the central server (\circled{d}) for aggregation. Every time a user's data is used in a training iteration, the model learns some information about that individual, quantified as the \emph{privacy cost}. Data owners set a \emph{privacy budget}, and each iteration consumes some of it. Ensuring that the budget is not exceeded provides formal privacy guarantees.

\noindent
\textbf{Untrusted third-party training applications.} 
In real-world deployments, the FL training application is typically owned by a \emph{third-party} entity distinct from the clients who own sensitive data, as it requires aggregating gradients across independent clients (data owners). For example, in medical ML training, hospitals own local servers containing patient records; in facial recognition, users control mobile devices holding personal images. In both cases, the aggregator may be an ML or mobile application service provider. Consequently, this third-party application must be granted access to sensitive data on client devices, yet clients cannot easily verify that only correctly noised gradients leave the device rather than raw user data or a non-DP transformation of it. 

Therein lies the challenge: existing ML frameworks that incorporate FL and DP require trusting third-party applications with access to sensitive data. Verifying that a training application correctly implements the DP algorithm is non-trivial, as the application operates directly on raw sensitive data and may violate DP through subtle deviations from the algorithm. Such deviations can stem from malicious design or programmer error and are difficult to detect.

Software-based approaches have significant disadvantages and shortcomings as we discuss in~\S\ref{sec:motivation}. Program analysis~\cite{solo,duet,lightdp,fuzzi,distance,hoare} can verify that an application satisfies DP, but requires assuming an honest programmer (\ie trusting the third party) and is thus unsuitable in our setting. Dynamic taint tracking~\cite{taintart,taintdroid} and software DP frameworks~\cite{pinq,gupt} can enforce privacy at runtime without trusting the programmer, but incur prohibitive performance overheads on ML accelerators due to tracking data in memory or enforcing isolation. Code attestation~\cite{takagi2025securingprivatefederatedlearning} avoids runtime overheads by having a trusted authority verify and sign the application before deployment; however, such an approach incurs several practical challenges in FL, such as requiring verification and signing for every code modification (see \S\ref{sec:motivation}). 

To obviate the need to trust a third-party application, prior works~\cite{dpfyml,ldp_external,dwork_calibrating,randomized_response,simha_cal,ulp_ldp,serandip} propose noising sensor data \emph{before} granting the training application access to it. Since DP is robust to post-processing, any subsequent computation on the noised data does not affect privacy. However, this significantly degrades model accuracy compared to noising gradients during training~\cite{dpfyml}, and has not been applied to text, images, or videos. Thus, providing DP guarantees against an untrusted third-party training application remains an open problem.

\noindent
\textbf{Goal.} In this work, we explore the feasibility of \emph{leveraging lightweight hardware} to provide DP guarantees and serve as a substrate within the accelerator for privacy-preserving ML training. To this end, we propose \X, the first framework for ML accelerators, to our knowledge, that enforces DP guarantees using hardware mechanisms. 
This obviates the need to trust third-party software in DP-based ML training. To enable this, the framework must provide the following guarantees: 
1) there is no leakage of sensitive training data, \ie only noised gradients can be shared out of the device;
2) gradients that are shared out of the device are noised according to the DP algorithm; and 
3) privacy cost of training should not exceed the defined budget.

\noindent
\textbf{Challenges} in designing a hardware-based DP framework:
\textit{Challenge 1:} Preserving the programmability of the ML accelerator. 
A naïve design would hard-code the full DP-SGD algorithm directly into accelerator hardware. However, this would require specialized datapaths and on-chip memories for all computation in order to track and isolate the computation of gradients and their noising. This would require extensive redesign of modern accelerators, which rely on shared compute and memory structures. Such specialization would severely restrict programmability, complicate support for different DP mechanisms (e.g., layer-wise vs. global clipping~\cite{layer_clip}), and make enforcing privacy budgets challenging.

\noindent
\textit{Challenge 2:} Efficiently tracking gradients in memory. Given that the DP mechanisms cannot be baked into the datapath, the system must \emph{dynamically and automatically identify} the locations of calculated raw and noised gradients in memory to ensure that only correctly noised gradients are shared out of the device. It is challenging to track the gradients in memory during program execution without compromising performance or incurring substantial area overheads. 

\noindent
\textbf{\X}. Our proposed framework consists of two hardware components. First, a lightweight tagging mechanism identifies and tracks \emph{correctly noised and clipped} gradients in hardware. Only data tagged as such by \X is allowed to leave the device by a privileged software manager. Second, a noising module ensures that gradients are \emph{correctly noised and clipped} to satisfy DP guarantees. An untrusted application is \emph{forced} to use this module, because only gradients noised by the module are tagged as ``safe'' by \X. The module is accessed using new instructions that mirror existing accelerator instructions, thereby preserving programmability. With \X, the application can flexibly perform \emph{any computation} on the ML accelerator using any data; however, only data that meets the required (and specified) differential privacy guarantees and budget will be marked ``safe''.

We also propose an extension, \Xe, that supports additional guarantees with slightly higher area and performance overheads: the training samples are randomly sampled every iteration, and intermediate data cannot be accessed between iterations. \Xe enables using DP accounting methods with tighter bounds, where more training iterations can be performed for the same privacy budget. \X and \Xe together support all modern DP-based FL algorithms that clip and add Gaussian noise~\cite{nbafl,fedavg,dpadapt,fl_dp,llm-dp1,llm-dp2,yang2023dynamic,cheng2022differentially} directly on the user device.

We model \X in SCALE-sim~\cite{scale_sim} simulator for TPU~\cite{tpuv2}, DIVA~\cite{diva}, and an accelerator based on Shidiannao~\cite{shidia}. Our evaluations on popular ML models demonstrate only small slowdowns ($<$0.3\%). We also implement \X in RTL and demonstrate negligible area (0.01\%) and power overheads (0.07\%). 
\noindent
We make the following contributions:
\setlist[itemize]{leftmargin=*}
\begin{itemize}[topsep=0pt]
    \item We motivate the need for additional privacy protection mechanisms in real-life deployments of DP-enabled FL. We explore the potential of leveraging hardware mechanisms to provide stronger DP guarantees.
    \item We identify key guarantees that must be met in hardware to ensure that user-defined privacy budgets cannot be violated by any application. We propose the first framework that provides DP guarantees in hardware for ML accelerators and obviates the need to verify third-party applications for ML training (formal proof in \S\ref{discussion}).  
    \item We propose two variants of \X to support all DP-based FL algorithms for local DP settings \ie the most secure setting where only data that satisfies DP guarantees can be shared out of the device.  
    \item We evaluate \X in simulation for various ML training workloads on four ML accelerators and show only small performance overheads. We implement \X in RTL and show negligible area and power overheads.  
\end{itemize}

\brokenpenalty=0
\widowpenalty=0
\section{Background}
\label{background}
\subsection{Differential Privacy (DP)}
\label{dp_background}

Differential privacy (DP)~\cite{dp} is a theoretical framework that bounds the influence any one individual data point can have on the outputs of a statistical function. A function over a dataset with user information 
satisfies DP if and only if the output of the function does not change significantly when a single data point (user or record) is added or removed from the input dataset. Here, DP assumes that the attacker can only observe the outputs of the statistical function. Therefore, if the function's output does not change meaningfully depending on whether a particular data point is included in the dataset, the attacker cannot reliably determine whether that record was part of the input dataset. This is enough to ensure meaningful protection against adversaries, even if they may have arbitrary auxiliary information about the user~\cite{dp}. 
Formally, for any two adjacent datasets, DP restricts the difference in probability distributions of the function's outputs, as stated in the following definition.
 
\begin{spacing}{0.9}
{\vspace{-0.5em}
\begin{definition}
\textbf{Differential Privacy:} Any function $\mathcal{M}: X\xrightarrow{}Y$, is  $(\epsilon,\delta)$ \textit{differentially private} if for two adjacent datasets $D$ and $D'$ \ie $\left|D\right|-\left|D'\right| = 1$ and $\forall K \in Y$, satisfies
\vspace{-0.5em}
\begin{equation}
   P(\mathcal{M}(D) = K) \leq e^\epsilon P(\mathcal{M}(D') = K) + \delta
   \label{equ_eddp}
\end{equation}
\vspace{-0.7cm}
\label{def:adjacency}
\end{definition}}
\end{spacing}
Here, ($\epsilon,\delta$) denotes the \textit{privacy cost}. When $\epsilon$ is sufficiently small, an adversary is unable to distinguish between adjacent datasets based on $\mathcal{M}$'s output. $\delta$ can be interpreted as the probability that the entire dataset is leaked
. Typically for a dataset with $n$ records, $\delta \ll \frac{1}{n}$ is recommended~\cite{near_abuah_2021}.

To ensure that the final output of any function ($F$) satisfies $(\epsilon, \delta)$-DP, it can be post-processed by adding noise sampled from a Gaussian distribution ($\mathcal{N}$) (\ref{equ_noising})~\cite{gauss_dp}. 
\vspace{-0.5em}
\begin{equation}
   M(x) = F(x) + \mathcal{N}\left(0,\sigma^2\right), \sigma = \sqrt{2S^2\log{\frac{1.25}{\delta}}}/\epsilon 
   \label{equ_noising}
   \vspace{-0.5em}
\end{equation}

Here, $S$ is the $\ell_2$-sensitivity of the function. The $\ell_2$-sensitivity is the maximum value by which the output of the function changes when a single entry in the dataset is removed. 

DP is robust to post-processing, \ie any subsequent analysis on the output of a DP function also satisfies $(\epsilon,\delta)$-DP. DP also  supports composition: if two functions that satisfy $(\epsilon{}_1,\delta{}_1)$-DP and $(\epsilon{}_2,\delta{}_2)$-DP are chained together to construct a third function, \ie $f_3(x) = f_2(f_1(x))$, the resulting function ($f_3$) satisfies $(\epsilon{}_1+\epsilon{}_2,\delta{}_1+\delta{}_2)$-DP. Thus, using composition, the privacy cost of an iterative algorithm such as ML training can be calculated using the privacy cost of a single iteration~\cite{dp}. In FL-training, a single training \emph{round} can consist of multiple training \emph{iterations} where the clients update the local model weights. If each iteration satisfies $(\epsilon,\delta)$-DP, the privacy cost of a round can be calculated using composition.
\vspace{-0.5em}
\subsection{Differential Privacy in Federated Learning}
\label{back_dp}
The DP formalism can be applied to stochastic gradient descent (SGD)-based training to provide DP guarantees when using FL to train models on sensitive data. It can be modified in two ways: \textit{per-batch} and \textit{per-example} clipping~\cite{fl_dp}. In both, noise is sampled from a Gaussian distribution with standard deviation $\sigma=C_{th}^2 z^2$, where $z$ is a user-defined scaling factor derived using $(\epsilon,\delta)$ and $C_{th}$ is a user-defined threshold for the $\ell_2$-norm of all the gradients (${\lVert g\rVert_2}$). The $\ell_2$-norm of the gradients (${\lVert g\rVert_2}$) is the square root of sum of squares of all the gradients calculated per iteration. Before noising, it is ensured that the $\ell_2$-norm of the gradients is less than $C_{th}$ by \emph{clipping} which involves scaling each gradient by $\min\left(1,\frac{C_{th}}{\norm{g}_2}\right)$.

\noindent
\textbf{Per-batch clipping:} For every training iteration, the gradients for the batch ($g$) are calculated (where gradients across examples in a batch are first added) and clipped such that the $\ell_2$-norm is less than $C_{th}$ (\ie  ${\norm{g}_2} \leq C_{th}$). Noise is added to these clipped gradients to obtain the gradients for the batch ($\hat{g}_n$) which are then used to update the model weights. 
\begin{spacing}{0.8}
{
\vspace{-0.5em}
\begin{equation}
\begin{aligned}  
    \hat{g}_n = g_c + \mathcal{N}\left(0,4C_{th}^2z^2I \right),\\
    \textrm{where } g_c = \min\left(1,\frac{C_{th}}{\norm{g}_2}\right)\cdot g 
    \label{equ_ldp}
\end{aligned}
\end{equation}
}\end{spacing} 
\noindent
\textbf{Per-example clipping:} In every training iteration, instead of calculating the gradients for the entire batch (of batch size $b$), per-example gradients ($g_i$) are calculated and the gradients are clipped such that the $\ell_2$-norm of the gradients for each example is less than $C_{th}$. Noise is then added to the summed gradients for the batch ($g_c$) and model weights are updated using the noised gradients ($\hat{g}_n$), as follows:
\begin{spacing}{0.8}
{
\vspace{-0.5em}
\begin{equation}
\begin{aligned}   
    \hat{g}_n = g_c + \mathcal{N}\left(0,C_{th}^2z^2I \right),\\
    \textrm{where } g_c = \sum_{i=1}^{b} \min\left(1,\frac{C_{th}}{\norm{g_i}_2}\right)\cdot g_i 
    \label{equ_dpsgd}
\end{aligned}
\end{equation}
}\end{spacing}

To enable DP-based ML training, the gradient computation for the entire batch ($g_c$) in every iteration should satisfy DP. In the \textit{per-example clipping}, the per-example gradients ($g_i$) are clipped such that $\norm{g_i}_2 \leq C_{th}$. In doing so, $\ell_2$-sensitivity of the computation is fixed to $C_{th}$. With \textit{per-batch clipping}, the batch gradient is clipped to $\lVert g\rVert_2 \le C_{th}$, and thus the sensitivity of the computation is $2C_{th}$. Based on (\ref{equ_noising}), the gradient computation during training satisfies $(\epsilon, \delta)$-DP, if (\ref{equ_dp_noise}) is true and by post-processing property the model also satisfies \mbox{$(\epsilon,\delta)$-DP}~\cite{dp}.
\begin{spacing}{0.8}
{
\vspace{-0.5em}
\begin{equation} \begin{aligned} z = \sqrt{\left(2\log{\frac{1.25}{\delta}}\right)}/\epsilon~, \label{equ_dp_noise} \end{aligned} \end{equation} }
\end{spacing}

Per-batch clipping ensures that an attacker cannot infer if a particular \emph{client} contributed to the training dataset, \eg used in LDP-FED\cite{ldp-fed} and NbAFL~\cite{nbafl}. Per-example clipping ensures that an attacker cannot infer if a particular \emph{record} was present in the training dataset, \eg used in federated variants of the DP-based SGD algorithm~\cite{dpsgd,fl}. A model can be trained to a similar accuracy using both clipping mechanisms. However, per-batch clipping requires more communication rounds, while per-example clipping requires larger clipping thresholds~\cite{fl_dp}.



\noindent
\textbf{Privacy Budget:} Information leaked during gradient computation is bounded by $(\epsilon,\delta)$, \ie the privacy cost. Each computation on the dataset increases this cost~\cite{dp2}. For example, consider one training round has a cost of $(\epsilon_1,\delta_1)$. Since training is iterative \ie multiple training rounds are performed over the same dataset; $T$ rounds cost $(T\epsilon_1,\,T\delta_1)$ by DP composition~\cite{dp}. Real-world deployments of such applications, an upper limit ($\epsilon_{max},\delta_{max}$) called \emph{privacy budget} is defined to ensure that the privacy cost does not exceed a threshold, \ie $T\epsilon_1\le\epsilon_{\max}$ and $T\delta_1\le\delta_{\max}$. Thus, each $(\epsilon,\delta)$-DP training round can be seen as \emph{consuming} a part of this \emph{budget}, and training is stopped once the \emph{budget} is exhausted. 

To reduce the cost of multiple rounds of training, modern DP-ML algorithms which perform \textit{per-example clipping} use \textit{privacy amplification by subsampling}~\cite{dpsgd}: records used for each training iteration are randomly sampled with replacement from the dataset. Using a randomly sampled batch reveals less information about each record, and reduces the cost of $T$ rounds to $(q\sqrt{T}\,\epsilon,\,\delta)$, where $q<1$ is the sampling probability.

\vspace{-0.25cm}

\section{Threat Model and Assumptions}
\label{assumptions}
This work targets the common FL scenario in which a third-party entity trains a model on
data from independent clients, such as a medical diagnosis model trained across multiple
hospitals~\cite{fedmed} or a fraud detection model built on transaction data from financial
institutions~\cite{fedfinance}. We model the FL system with two types of entities: 1) \emph{Data owners} are the independent entities that own the sensitive data and the devices
or servers used for local model training; each data owner also controls its own privacy
budget allocation during training. 2) \emph{Aggregator} is the entity that trains a global model using sensitive data from multiple independent clients, using its own server for aggregating noised gradients.

Our threat model assumes the aggregator is an untrusted and potentially malicious entity from the data owners' perspective. The aggregator controls both the training application executed on client devices and the central server where gradients are aggregated. While conventional OS-based privacy mechanisms protect sensitive data through access control, they offer no protection once an application is granted data access. The training application owned by the aggregator can therefore share sensitive data, non-DP gradients, or any other non-DP transformations of sensitive data outside the client device. The aggregator can thus perform a range of privacy-violating adversarial attacks on the aggregated model, including inference attacks to learn private information about the clients' datasets~\cite{attack1,attack3}. Since the data owners control the client devices, we assume the local training infrastructure (including hardware and system software) is trusted, with security vulnerabilities falling outside the scope. We reasonably assume the aggregator has no physical access to these systems, \ie physical side-channel attacks cannot be used to extract sensitive information.

DP-based FL mechanisms fall into two categories depending on where the DP algorithm is implemented: Centralized DP (CDP) and Local DP (LDP). In CDP, the DP mechanism is implemented at the central server owned by the aggregator; in LDP, it runs on the client devices owned by the data owners. CDP requires transmitting raw, unnoised gradients outside the local trust boundary and, therefore, inherently requires trusting the aggregator. Thus, CDP cannot be used under our threat model, where the third-party aggregator is untrusted, and we focus exclusively on LDP.  


\vspace{-5pt}
\section{A Case for Hardware-based Guarantees}
\label{sec:motivation}


To motivate the benefits of providing primitives in hardware for DP guarantees, we first discuss a purely software-based solution that could potentially address the challenge of trusting third-party applications. We note that many approaches below are hypothetical as they have not been proposed or constructed to our knowledge to address the above challenge. 

\emph{(i) Static or dynamic program analysis~\cite{solo,duet,lightdp,fuzzi,distance,hoare,dfuzz,chorus,ektelo,dduo}.}
These works require programmers to annotate variables that store sensitive data and specify privacy rules governing data usage, effectively assuming an ``honest but fallible'' programmer. They cannot be used in settings where the programmer, \ie the third party owning
the training application, is untrusted.


\emph{(ii) Dynamic taint tracking-based approaches~\cite{taintart,taintdroid} in software}  can enforce privacy properties at runtime. However, they rely on a trusted privileged software layer to instrument and propagate taint. Thus, they are not directly applicable to ML accelerators such as TPUs~\cite{tpu}, which lack support for privilege-based isolation mechanisms. They also incur substantial performance overheads (upto 15\%~\cite{taintart}).

\emph{(iii) Code attestation-based approaches~\cite{takagi2025securingprivatefederatedlearning}},
where a trusted authority verifies that the FL training code meets DP requirements, compiles
it, and generates a cryptographic signature. Clients validate this signature during
training to confirm that the application was compiled by the trusted party. This approach has
several limitations: (a) it requires a separate trusted party distinct from the clients
and aggregator; (b) verifying that code satisfies DP requirements is non-trivial, as it requires manual analysis by experts when the programmer is untrusted; (c) the application must be compiled and signed per platform, which is
challenging as clients commonly use heterogeneous devices and operating systems in FL; and (d) any
update to the application code requires re-verification, recompilation,
and re-signing.

\emph{(iv) Software frameworks~\cite{pinq,gupt,psi} that restrict access to sensitive data.} These frameworks expose a controlled API through which the application executes custom functions on sensitive data. The outputs are automatically transformed to satisfy DP guarantees before being returned to the application. These approaches suffer from two significant limitations. First, they impose substantial performance overhead on ML accelerators. To enforce DP guarantees, such frameworks must isolate each execution of the third-party training function to prevent reuse of intermediate results across training examples and iterations. OS-level memory isolation can enforce this on the host CPU, but ML accelerators lack hardware support for such mechanisms. Device memory is managed entirely by the application in software. To prevent unintentional or malicious data reuse, the framework \emph{must} zero out accelerator memory between invocations, an operation that takes 0.3\,s on a TPUv3~\cite{tpuv2} (32\,GB memory, 900\,GB/s bandwidth) per training iteration at batch size~8. Our evaluations show that the same iteration for VGG16 completes in $\sim$0.03\,s, a $10\times$ slowdown. Second, several DP-based FL applications use local personalization~\cite{feddpa,fedsam} to improve training accuracy. They store a local non-DP state that is used to modify the gradients before any DP transformation. Such frameworks do not support local personalization algorithms, as they prohibit the application from retaining any non-DP state between training rounds.

\emph{(v) Software ML frameworks with restricted APIs.} In this approach, the client device installs a \emph{validated} (trusted) ML framework that implements all training code. The third-party training application uses an API or messaging interface to provide the model description, weights, and training parameters (\eg learning rate, batch size). The framework trains the model and returns noised gradients to the application, which are shared with the server. This approach has two major limitations. First, ML frameworks are large codebases with complex interactions among components such as layer definitions, gradient computations, and noising. Validating that such a codebase correctly enforces DP requires manual expert analysis across all possible code paths and/or reliance on programmer annotations. Even validated codebases can subtly violate DP~\cite{dp_bug,tramer2022debuggingdifferentialprivacycase}, because DP violations need not manifest as correctness bugs. A bug in compilation tools or imported libraries that alters, for instance, the rounding mode used during clipping does not significantly affect training accuracy but can break the DP guarantee~\cite{dp_bug}. Validating such a framework, therefore, requires ensuring that the high-level algorithm, each operator's implementation, \emph{and} their interactions during training all preserve DP, a task whose difficulty scales with framework complexity. Second, any modification to the training algorithm, such as adopting adaptive clipping~\cite{dpadapt} or a new personalisation strategy~\cite{feddpa}, requires: (a) updating the independent training framework; (b) revalidating the updated codebase; and (c) redeploying across all participating clients. In a cross-device FL setting with heterogeneous clients, this requires coordination among multiple independent parties before the training application can adopt the training algorithm during deployment.



Thus, providing DP guarantees for FL in software requires combining several techniques: program analysis/verification and code attestation across the software stack, including various libraries and drivers. This approach is potentially error-prone~\cite{dp_bug,tramer2022debuggingdifferentialprivacycase}, requires coordinating across multiple parties, and limits programmability and ease of updates.

In this work, we aim to investigate the design of hardware-based checks for core DP invariants (\eg clipping and noising) that can serve as a \emph{substrate} for building privacy-preserving ML systems. These checks are exposed as concise interfaces for DP operations that the training application uses to compute gradients. This design supports modifications to the training algorithm without requiring re-verification of the application, effectively decoupling programmability from privacy enforcement. Once verified, these hardware checks ensure that any software stack running on the accelerator meets privacy guarantees.


\section{\X: Design Overview}
\label{approach}
\begin{figure}
\begin{center}
\begin{adjustbox}{%
    max totalsize={\linewidth}{\textheight}
}
    \centering
    \includegraphics[scale=1.5,trim={0.75cm 0.2cm 0.2cm 0.2cm},clip]{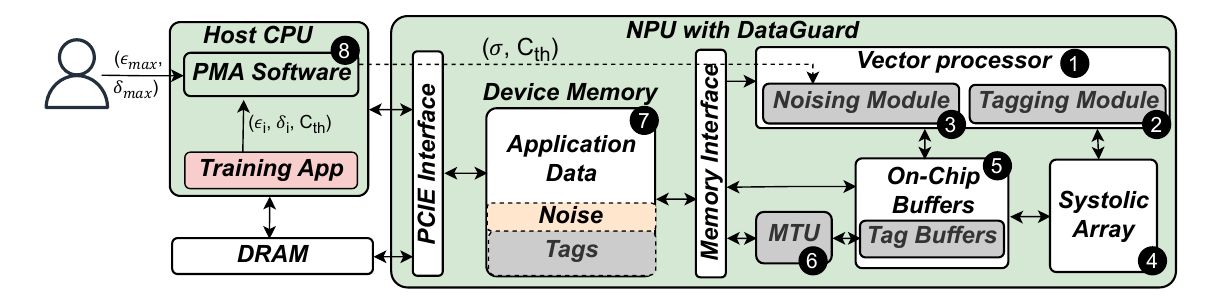}
\end{adjustbox}
    \vspace{-0.5cm}
    \caption{Overview of a system with \X.}
        \vspace{-1cm}
    \label{fig:overview}
\end{center}
\end{figure}

We introduce \X, a framework that provides DP guarantees in hardware for ML accelerators such as TPUs~\cite{tpu,tpuv2,tpuv4}, DIVA~\cite{diva} and Genesys~\cite{genesys}. 
We assume the ML accelerator is a separate device with dedicated device memory and all operations can be performed on the accelerator. The host CPU schedules the necessary computations on the accelerator and coordinates data movement. Fig. \ref{fig:overview} presents an overview of a system with \X, having an example accelerator with a systolic array (\circled{4}), a dedicated memory (\circled{7}), and a vector processor (\circled{1}). Modules shaded in green indicate the trusted modules; we trust the host CPU and the privileged software running on the host. 
We also trust the accelerator and hardware modules of \X. \X interfaces with any third-party training application via a privileged software layer: the privacy management application (PMA) (\circled{8}).
The data owner sets the privacy budget $(\epsilon_{\max},\delta_{\max})$. Using Eq.~\ref{equ_ldp}, the PMA computes the standard deviation for noising ($\sigma$) and the maximum training rounds ($N$) allowed within this budget. The PMA cannot itself guarantee that a third-party application adheres to $(C_{th},\epsilon_i,\delta_i,\epsilon_{\max},\delta_{\max})$; it configures the hardware with these parameters, and \X enforces them in hardware.


To provide guarantees that each training round is ($\epsilon_{i}$,$\delta_{i}$)-DP and the ($\epsilon_{max}$,$\delta_{max}$) budget specified by the user is never exceeded, \X guarantees that the following conditions are satisfied: 1) \textbf{Noising condition:} Ensuring 
that the data is noised by sampling from the correct distribution, \ie with the $\sigma$, decided by the PMA.  2) \textbf{Clipping condition:} Ensuring 
that the $\ell_2$-norm of the noised data is less than or equal to $C_{th}$, as configured by the PMA. The application can perform the gradient clipping using either per-example or per-batch clipping.  3) \textbf{Budget condition:} The training algorithm does not exceed the ($\epsilon_{max},\delta_{max}$) as provided by the user. The clipping and noising conditions together ensure that the ($\epsilon_i,\delta_i$)-DP guarantees are satisfied for each round. However, the application may perform multiple training rounds and the total privacy cost is calculated using composition theorems (\S\ref{background}). Thus, to ensure that the total budget is not exceeded ($\epsilon_{max},\delta_{max}$), the number of training rounds must not exceed $N$, as calculated by the PMA. 4) \textbf{Privacy condition:} Only the noised gradients are allowed to leave the device. 
\vspace{-4pt}
\subsection{\X: Key Ideas}
Fig. \ref{fig:key_mod} shows the two operations implemented by \X to provide the necessary guarantees: 1) $noise\_l2$ (\circled{a}) and 2) $privacy\ budgeting$ (\circled{b}). Regardless of the computation performed by the application (\circled{c}), any data that leaves the device must first be transformed using the noising operation. The $noise\_l2$ operation ensures that the data is noised with Gaussian noise with a specified $\sigma$. The $\ell_2$-norm of all the data that is noised is also computed during the $noise\_l2$ operation. The $privacy\ budgeting$ operation then performs the necessary checks to ensure that the overall privacy budget ($\epsilon_{max}, \delta_{max}$) is not exceeded. We now explain how these operations are used to satisfy the necessary conditions enumerated previously. We explain how these are implemented in \S\ref{detailed_design} and perform a security analysis in \S\ref{sec_analysis}.

\begin{figure}[h]
\begin{center}
\vspace{-0.4cm}
\begin{adjustbox}{%
    max totalsize={\linewidth}{\textheight}
}
    \centering
    \includegraphics[trim={0.2cm 0cm 0 0.1cm},clip,scale=0.8]{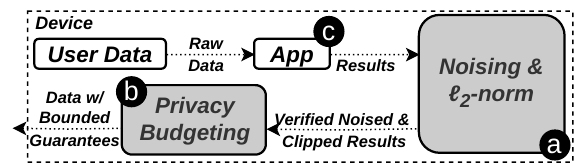}
\end{adjustbox}
\vspace{-0.25cm}
    \caption{Key operations of \X.}
    \vspace{-0.5cm}
    \label{fig:key_mod}
\end{center}
\end{figure}

\subsubsection{Verifying the privacy condition} 
All the data in device memory (\circled{7}) is associated with tags. These tags are used to identify data that is ``safe", \ie data that has been transformed using the operations \circled{a} and \circled{b} in Fig. \ref{fig:key_mod}. 
To do so, we implement a lightweight information flow tracking (IFT) and tagging mechanism in the tagging module (\circled{2}).

\subsubsection{Verifying the noising condition}
\label{noising_cond}
We implement a configurable noising module (\circled{3}) in hardware that implements the $noise\_l2$ operation (\circled{a}). This module is configured by the PMA to sample noise from a Gaussian distribution with a standard deviation of $\sigma$.  The training application uses a custom instruction (\texttt{add-noise}) to perform noising using this module. Only the results generated by this module are tagged as ``noised" by \X. Thus, for any data to be shared out of the device it must be noised using the \texttt{add-noise} instruction. If the ``noised" data satisfies all the necessary conditions (outlined in \S \ref{clip_cond}, \S \ref{budget_cond}), it is ``safe" and is allowed to be shared out of the device.

This ensures that, for the noised gradients to be tagged correctly and communicated out of the device, it should be processed by the noising module of \X. However, an application can use the \texttt{add-noise} instruction arbitrarily in the code and thus put any initial, intermediate, or computed data through the noising module. This data would still be marked as ``noised" and may be shared out of the device. \X is designed such that the DP guarantees are met regardless of this (explained in \S \ref{clip_cond}, \S \ref{budget_cond}). 
\subsubsection{Verifying the clipping condition}
\label{clip_cond}
The noising module, in addition to adding noise, calculates and stores in hardware a cumulative sum of squares ($S$) for all data being noised. When a training round is completed and the application calls the \texttt{audit} instruction, \X checks that the \textit{clipping condition} is satisfied using the cumulative sum \ie $\sqrt{S} \leq C_{th}$ and resets it to $0$. Note that the \texttt{audit} instruction can be called arbitrarily by the application. This does not violate ($\epsilon_i$, $\delta_{i}$)-DP. However, the application must call both \texttt{add-noise} and \texttt{audit} instructions at the end of every training round to implement DP-enabled ML training correctly (to ensure accuracy of the trained model). These two instructions together implement the $noise\_l2$ operation (\circled{a}).

\subsubsection{Verifying the budget condition}
\label{budget_cond}
\X tracks the number of invocations of the \texttt{audit} instruction with a hardware counter. This counter stores the current \emph{epoch}, \ie \texttt{audit} calls made till the current execution point. 
Thus, when the \texttt{audit} instruction is called at the end of each training round, this counter is incremented. The epoch count is used by the PMA to calculate the total privacy cost of training (as explained in \S \ref{back_dp}) and verify that it does not exceed the privacy budget. For example, when the \texttt{audit} instruction is called $n$ times, the cost is calculated as ($n\epsilon, n\delta$). The application is allowed to share data out of the device only if the calculated privacy cost is less than the budget \ie $n\epsilon < \epsilon_{max}$ and $n\delta < \delta_{max}$. The application can use the \texttt{add-noise} and \texttt{audit} instructions for any data such as raw training samples, gradients, or any intermediate results. The privacy cost calculation in \X ensures that the budget is never exceeded, thereby protecting privacy. When used as intended for DP-enabled ML training, we assume that the application shares the noised gradients every time the \texttt{audit} instruction is called. If the application does not share data every time, \X overestimates the privacy cost. However, user privacy is preserved because the budget cannot be exceeded.

Together, \texttt{add-noise} and \texttt{audit}, provide the necessary guarantees: \texttt{add-noise} ensures that only properly noised values can ever be marked ``noised" and computes the sum of squares ($S$) of the data required to verify clipping, while \texttt{audit} verifies clipping (by using $S$ to compute $\ell_2$-norm) and budget conditions before the data is marked "safe" to leave the device. The application is thus \emph{forced} to use these instructions to transmit noised gradients to the server. Any data marked "safe" satisfy DP guarantees (\S\ref{discussion}).

\section{Detailed Design}
\label{detailed_design}
As shown in Fig. \ref{fig:overview} the ML accelerator 
consists of a systolic array (\circled{4}) with 64k processing elements for multiply-and-accumulate operations and a 128-lane wide vector processor~(VPU)~(\circled{1}) for other operations. 
Fig. \ref{fig:proc_changes} illustrates the modifications in the VPU. 
The accelerator performs training using single-precision floating-point numbers~\cite{ieee754_2019} \ie each element is 4 bytes. 
The noising~(\circled{2}) and the tagging~(\circled{2}) modules are implemented in the VPU. The on-chip buffers~(\circled{5}) are modified to store the tags and a memory tagging unit (MTU)~(\circled{6}) is implemented to perform the memory operations for the tags. The device memory~(\circled{7}) is partitioned to store the application data, tags, and sampled noise values. 
\begin{figure}[!htb]
\begin{center}
    \vspace{-0.2cm}
\begin{adjustbox}{%
    max totalsize={\linewidth}{\textheight}
}
    \centering
    \includegraphics[scale=0.75]{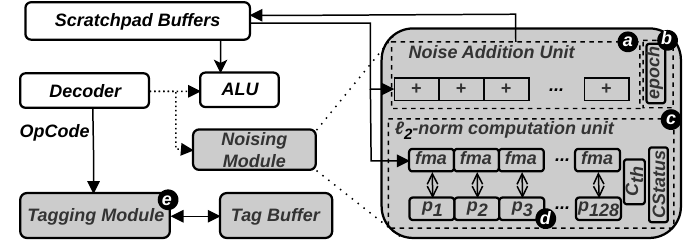}
\end{adjustbox}
    \vspace{-0.2cm}
    \caption{Block diagram of the vector processor in the accelerator. Modules added by \X are shaded.}
    \vspace{-0.85cm}
    \label{fig:proc_changes}
\end{center}
\end{figure}

\subsection{Noising Module}
\label{op:noising}
 Fig. \ref{fig:proc_changes} illustrates the components of noising module: noise addition unit~(\circled{a}), budgeting unit~(\circled{b}), and $\ell_2$-norm calculation unit~(\circled{c}). The noising module implements two instructions: \texttt{add-noise} and \texttt{audit}. 
 The \texttt{add-noise} instruction adds a floating-point (FP) operand and a noise term, then updates the $\ell_2$ norm using the operand. It requires twice as many FMA operations as a plain FP addition. Repurposing existing FMA units would degrade performance, so we implement dedicated units in the execute stage for both operations and use a pipelined implementation based on HardFloat~\cite{refJ} to avoid impacting the critical path.

\noindent
\textit{Noise addition unit:} The noise addition unit~(\circled{a}), is responsible for performing the noise addition during \texttt{add-noise} operation. The sampled values from the noise distribution are stored in a protected memory region on the device memory and are populated by the PMA (using a software implementation~\cite{harris2020array} of noise sampling) on the host CPU. This protected region is configured by the PMA using a configuration register (\emph{noise-br}) on the accelerator. The application is not allowed to access this memory region. The on-chip buffers are partitioned such that one-fourth of the buffers are used to store the sampled noise values during the noising operation. When the instruction is executed for the first time, a batched data transfer of 6 MB (size of on-chip buffers is 24 MB) is initiated from the device memory to on-chip buffers. These values are then added to the gradients to complete the noising operation. This is similar to how noising is implemented in other accelerators~\cite{diva}. 

The privacy guarantees when using DP depend on the correctness and freshness of the injected noise. The trusted PMA generates the noise values and writes them directly into the protected accelerator memory. We assume that the PMA samples from the correct distribution and produces fresh randomness for each training round. Importantly, \X remains completely agnostic to the concrete noise-generation method. Noise-generation mechanisms based on software sampling, hardware random number generators, or verifiable hardware-based approaches (\eg DINAR~\cite{dinar}) can be trivially integrated with \X as black-box components.

\noindent
\textit{Budgeting unit:} Consists of an 8-bit counter (\emph{epoch} \circled{b}) which is incremented every time an \texttt{audit} instruction is executed. It is reset to $1$ by the PMA before execution of the application. 

\noindent
\textit{$\ell_2$-norm calculation unit:} The $\ell_2$-norm calculation unit~(\circled{c}) maintains the cumulative sum of squares of the operands to \texttt{add-noise} instruction. The unit has 128 registers~(\circled{d})~($P_i$) that store partial sums of the $\ell_2$-norm and 128 fused multiply-add units, one for each vector lane in the VPU. The partial registers for each lane are updated with the sum of the squared operand for the corresponding lane and the initial value. 

The $\ell_2$-norm calculation unit also includes two registers: a status register (\emph{CStatus}) and a configuration register ($C_{th}$). The PMA configures the $C_{th}$ register with the clipping threshold before executing the application. When an \texttt{audit} instruction is executed, the partial sums in the registers~(\circled{d}) are added together to obtain the $\ell_2$-norm ($P_{agg}$) of all elements noised in the current iteration. The noising module then checks if the calculated $\ell_2$-norm is less than the threshold in $C_{th}$ \ie $\sqrt{P_{agg}} \leq C_{th}$. The \emph{CStatus} register is written with the value of the \emph{epoch} if the register was cleared before (\ie the current value is $0$) and the check fails. After the check, the partial sum registers~(\circled{d}) are cleared. The \emph{CStatus} register can only be cleared by the PMA and we explain how it is used in \S \ref{tagging}.

\subsection{Tagging Module}
\label{tagmod}
\X implements tagging using a lightweight mechanism. The data in the device memory is divided into blocks of 512 bytes (128 elements x 4 bytes), and each block has 1 byte of metadata, called tags, associated with it (stored in the protected tag region). This metadata is maintained automatically by the tagging module as the application is executed. The ratio of tag size to data was chosen because the VPU generates 512 bytes of data for every operation, such as \texttt{add-noise}. Thus, updating a single tag for 512 bytes minimizes the additional memory traffic required for tags while ensuring correctness. We now describe the mechanisms that update the tags. 


\textbf{Tagging module components:}
\label{tagging}To ensure DP guarantees for each training iteration, \X only needs to ensure that the tags indicate whether the data has been properly transformed using the noising operation (\circled{a} in Fig. \ref{fig:key_mod}). However, both instructions (\texttt{add-noise} and \texttt{audit}) should be executed to correctly implement the $noise\_l2$ operation (\circled{a} in Fig. \ref{fig:key_mod}) \ie $\ell_2$-norms computed in each epoch should be less than $C_{th}$. To ensure that this check can be performed by the PMA, we tag the noised data with the value of \emph{epoch} register. When the application attempts to send any noised data out of the device, the PMA ensures that the corresponding tag value is less than the \emph{CStatus} register (if it is not $0$). The \emph{CStatus} register indicates the first epoch in which the checks for \texttt{audit} instruction failed. Thus, only noised data that satisfies the required guarantees can be shared out of the device. 

We add additional SRAM buffers (tag buffers) to store tags alongside data 
. We design the tag buffers of \X such that the accesses for tags do not introduce any contention for the ports of the SRAM buffers which store application data. The on-chip SRAM buffers in ML accelerators are organized as multiple independent banks, and multiple hardware modules, each equipped with dedicated ports~\cite{gemini,genesys}. We implement the tag buffers as an additional bank with its own independent port in each module. For example, in a TPU~\cite{tpuv2} like design, we add two ports. This ensures that there is no port contention for accessing the application data and the corresponding tag in the same cycle. These tag buffers are only added to buffers (\circled{5}) where the results of the operations are written. A tagging module~(\circled{2}) is implemented in the VPU, which appropriately sets the tags of results generated by the instructions. \X implements a memory tagging unit (MTU) (\circled{6}) that moves the tags between on-chip buffers and device memory along with data. The MTU also contains a configuration register (\textit{tag-br}), which stores the base address of tag region and is used to calculate addresses for storage and retrieval of tags. 

\textbf{Tagging Operation:}
In ML workloads, only the \textit{noising} operation can generate data that is safe. Thus, we leverage this to enable highly efficient tagging.  The results from the systolic array are always marked as sensitive. The results generated in the VPU are handled by the tagging module~(\circled{e} in Fig. \ref{fig:proc_changes}). If the \textit{op-code} of the instruction corresponds to the \texttt{add-noise} instruction, the results are tagged with the value of \emph{epoch}, thus being marked as ``noised". Any instruction other than the \texttt{add-noise} sets the tags of the corresponding results to $0$. Thus, \X avoids fetching the tags from the device memory for every operation \ie any data fetched from the device memory is tagged with $0$ in the on-chip buffers. When the data is written back to the device memory, the tags from on-chip buffers are used to update the corresponding tags in the device memory. The tagging module also checks the tags of the operands of control flow instructions. If the tag is $0$, the CStatus register is updated to indicate a failure \ie control flows dependent on sensitive data are not allowed.

However, in a DP-based FL application, the application may execute multiple training iterations in a single round, while noising gradients in each iteration. For such applications, the noised gradients from each training iteration are aggregated at the end of a training round to calculate the round update. \X allows aggregating noised data using the vector addition instruction (\texttt{vadd}) while tagging the result as ``noised". To facilitate this, a custom instruction (\textit{load-tagged}) is added, which fetches the tags along with the data from the off-chip memory into the on-chip buffers. When the application executes a \texttt{vadd} instruction on noised data, the tagging unit checks the tags of the source operands. If all the source operands are ``noised", \ie have a non-zero tag value, then the result is tagged with the tag value which is the higher of the two, or zero otherwise. This satisfies the required guarantees because the result is only tagged as ``noised". The PMA performs the check to ensure that the clipping condition is satisfied for the corresponding epoch before the application is allowed to share aggregated data out of the device.

\subsection{Integration with ML frameworks}
\label{programming}
\X can be seamlessly integrated with ML frameworks via APIs implementing core operations (noising, audit). Aggregators insert these APIs into training code, which compilers translate to \X instructions. Since \X instructions mirror existing ML accelerator instructions (e.g., \texttt{add-noise} parallels \texttt{vector-add}), compiler extensions are straightforward. This approach reduces programming burden compared to prior works requiring custom languages or frameworks~\cite{solo,duet,lightdp,fuzzi,distance,dfuzz}, while offering greater flexibility than libraries with fixed DP mechanisms~\cite{dduo,diffprivlib,ektelo} by enabling custom DP implementations.

\section{\Xe : Extension of \X}

While \X enables FL applications that use per-batch clipping, it does not support per-example clipping. We propose \Xe, an extension for per-example clipping and subsampling (\S \ref{back_dp}) to enable higher accuracy under the same budget. For subsampling, we must ensure records for each training batch are \emph{randomly} sampled from the overall dataset (\S \ref{back_dp}). In this discussion, we assume that one training round consists of one iteration (there is however no restriction). Per-example clipping requires a modified clipping condition described in \S \ref{approach}: \textbf{Clipping condition:} The $\ell_2$-norm of the gradients computed for \emph{each training sample} must not exceed $C_{th}$, as configured by the PMA.  
\begin{figure}[tb]
\begin{center}
\begin{adjustbox}{%
    max totalsize={\linewidth}{\textheight}
}
    \includegraphics[trim={0.2cm 0mm 0 0cm 0mm},clip]{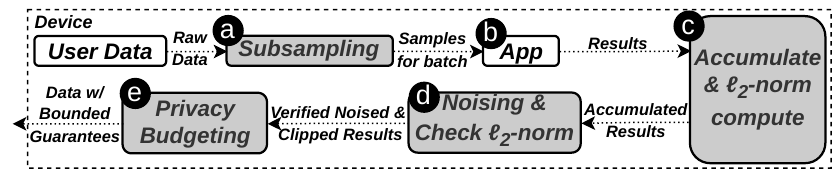}
\end{adjustbox}
    \vspace{-0.5cm}
    \caption{Key operations of \Xe.}
    \label{fig:key_mod_ex}
    \vspace{-1cm}
\end{center}
\end{figure}
\subsection{\Xe: Key Idea}

Fig. \ref{fig:key_mod_ex} shows the four operations implemented by \Xe to provide the necessary guarantees: 1) $sub\-sampling$ (\circled{a}), 2) $acc\_l2$ (\circled{c}), 3) $noise\_chk$ (\circled{d}), and 4) $privacy\ budgeting$ (\circled{e}). 
$Sub\-sampling$ (\circled{a}) ensures that each batch is randomly sampled from the training dataset.  $Acc\_l2$ (\circled{c}) computes the $\ell_2$-norm for each of the per-example gradients. This operation then accumulates the gradients to compute the gradients for the entire batch. $Noise\_chk$ (\circled{d}) noises the gradients and checks that the $\ell_2$-norms of the per-example gradients are less than $C_{th}$. $Privacy\ budgeting$ (\circled{e}) is performed as in \X (\S \ref{budget_cond}). We now explain how these operations check that the necessary conditions are satisfied. We use gradients in our explanations below, but as in \X, these operations can be performed on any data without violating DP guarantees. All data in the device memory is associated with tags. Fig. \ref{fig:mapping} shows the tag fields in \Xe: an identifier of the training record used in the computation ($rid$), whether the data is ``noised" ($s$), and the round in which the data was generated ($depoch$). Only data tagged as ``noised" ($s=1$) and also satisfies the conditions in \S\ref{xe_clip}-\S\ref{xe_budget} can be shared out of the device. 
\subsubsection{Enabling subsampling} 
To ensure that the privacy cost of training is calculated correctly, \Xe must ensure that the application can only access a batch size (\texttt{b}) number of samples every iteration that are randomly sampled from the training dataset. Thus, the training data is stored in a designated region of device memory (record region) initialized by the PMA. \Xe introduces a new instruction (\texttt{load-record}) to allow the application to fetch data from the record region. If the application uses any other instruction to load data from the record region, it is disallowed from sending data out of the device. The \texttt{load-record} instruction takes as input a \emph{record index} (which is between $1$ to \texttt{b}). This index maps to a randomly selected training record as determined by the PMA at the beginning of each training iteration. \Xe tracks the index of the record used to calculate any gradient using IFT. Any record and the results of computation using the record are tagged with its index in the $rid$ field. The $rid$ field is set to $0$ if multiple records are used to compute the gradients. 
\begin{figure}[h]
\begin{center}
\vspace{-0.5cm}
\begin{adjustbox}{%
    max totalsize={0.6\linewidth}{\textheight}
}
    \centering
    \includegraphics[trim={1mm 1mm 1mm 1mm},clip]{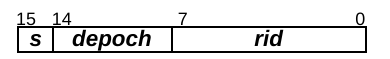}
\end{adjustbox}
    \caption{Tag fields for \Xe.}
    \vspace{-0.75cm}
    \label{fig:mapping}
\end{center}
\end{figure}
\subsubsection{Verifying the clipping condition}
\label{xe_clip}
We introduce a custom instruction (\texttt{acc-grad}) to perform the $acc\_l2$ operation. This instruction is used by the application to accumulate per-example gradients and obtain the gradients for the batch. The results of this instruction are tagged with the value $-1$ in the $rid$ field.  When the instruction is executed, gradient $\ell_2$-norms are calculated for each record (based on the tagged $rid$). The application must call the \texttt{audit} instruction at the end of every training iteration. When the \texttt{audit} instruction is called, each calculated $\ell_2$-norm is checked to be less than $C_{th}$. 
\subsubsection{Verifying the noising condition} 
\label{xe_noise}
The application must use the \texttt{add-noise} instruction to noise the accumulated gradients calculated using \texttt{acc-grad}. \Xe uses the $rid$ field in the tags to ensure that the application is performing the noising operation on the accumulated gradients. Only the results of the \texttt{add-noise} instruction are tagged as ``noised". 



\subsubsection{Verifying the budget condition}
\label{xe_budget}
\Xe tracks the number of iterations executed by the application using a hardware counter (\emph{epoch}) which is incremented when the \texttt{audit} instruction is executed. When the application tries to share data out of the device after $n$ iterations, the PMA computes the privacy cost as $(q\sqrt{n}\epsilon, \delta)$, where $q$ is the sampling probability. The application is allowed to share data out of the device only if the computed privacy cost is less than the budget. 

To ensure that the calculated privacy cost is accurate, \Xe must ensure that the application does not use any data that was computed using records accessed in a previous iteration. To this end, \Xe tags the results of all computations on the accelerator with the value of the \emph{epoch} register in the $depoch$ field. When an application fetches any gradients from memory, the $depoch$ field is checked to ensure that the gradients were generated in the current \emph{epoch} (and thus only using training data from the current iteration).

\subsection{\Xe: Detailed Design}
The device memory in \Xe is divided into 5 segments: the model, records, scratchpad, secure scratch, and tags. The PMA configures the starting addresses of each of these regions using the \emph{record-br}, \emph{scratch-br}, \emph{noise-br}, and \emph{tag-br} registers. The PMA also stores the size of each record in the \emph{rec-sz} register. The model weights and the noised gradients are stored in the model region, while the scratchpad region is used to store any intermediate results, including the gradients. The records region is used to store the sensitive training data. The secure scratch region is used to store the noise samples generated by the PMA and the index to record mappings for each batch. This mapping is generated by the PMA on the host CPU. The secure scratch and tags regions are protected and cannot be accessed by the application directly.

\Xe uses a 2-byte tag for every 512 bytes of data. The data in the model region is considered to be the ``noised" data ($s$ is initialized to $1$) by the PMA. A 1-byte $rid$ field (as in Fig. \ref{fig:mapping}) can support a maximum batch size of 254. The size of the $rid$ field can be increased, thereby increasing the size of the tags, to support larger batch sizes.


\subsubsection{Modifications to the tagging unit} The tagging unit (\circled{e} in Fig. \ref{fig:proc_changes}) is modified to update the tags of the results based on the tags of the source operands: 1) $s$: For any instruction except \texttt{add-noise}, it is set to $0$ if any operands have $s=0$, otherwise it is set to $1$. It is set to $1$ for the results of \texttt{add-noise} always. 2) $rid$: It is set to $-1$ for the \texttt{acc-grad} instruction. For any other instruction, if all the operands with $s=0$ have the same $rid$ value $x$, it is set $x$; otherwise, it is set to $0$. 3) $depoch$: it is set to the current value of \emph{epoch}. 

The tagging module also checks that all the operands of the \texttt{add-noise} instruction have $rid = -1$ and that all the operands of the \texttt{acc-grad} instruction have $rid\neq0$. If either check fails and $CStatus=0$, then $CStatus$ is set to the value of \emph{epoch}. The value of the $CStatus$ register is used by the PMA to ensure that the condition was satisfied in the corresponding epoch for the data, \ie $depoch < CStatus$.


\subsubsection{Modifications to the memory tagging unit} The MTU (\circled{6} in Fig. \ref{fig:overview}) is modified to read/write tags for all data fetched/written to memory. The MTU checks the \emph{depoch} field of any data fetched from scratchpad region and ensures that it is equal to \emph{epoch} (current iteration count). If the check fails and $CStatus=0$, $CStatus$ is updated to the value of \emph{epoch}. We implement \texttt{load-record} in the VPU (\circled{1} in Fig. \ref{fig:overview}). The \texttt{load-record} instruction takes index \emph{idx}, offset (\emph{off}), target \emph{tgt}, and size \emph{bsize} as operands. The mapping of \emph{idx} to an index of a record \emph{tid} in the training dataset is fetched from memory and used to calculate the base address for loading \emph{bsize} bytes from main memory, onto on-chip buffers at the location indicated by \emph{tgt}. The base address is calculated by adding \emph{off} to the product of \emph{tid} and \emph{rec-sz} register.

\subsubsection{Modifications to the noising unit} 
The noising module calculates the $\ell_2$-norm for the gradients computed using each record when the \texttt{acc-grad} instruction is executed. Similar to \X, the partial sum registers ($p_i$, \circled{d} in Fig. \ref{fig:proc_changes}) are updated with the sum of squares of all gradients that have the same $rid$. The partial sums for each record are stored in the secure scratch region of device memory. When the \texttt{audit} instruction is executed, the partial sums for each record are fetched from the device memory to compute the gradient $\ell_2$-norms for each record. Similar to \X, a check is performed to ensure that the $\ell_2$-norm for each record is less than $C_{th}$ and the \emph{CStatus} register is updated if the check fails. The \texttt{add-noise} instruction adds noise to the data as in \X.

\section{Discussion}
\label{discussion}
\label{limitations}

\label{sec_analysis}
\noindent
\textbf{Security Analysis} 
\X and \Xe provide formal guarantees that: (1) any computational results derived from sensitive training data on the accelerator and transmitted from the device satisfy $(\epsilon,\delta)$-DP, and (2) the cumulative privacy expenditure remains within the allocated budget. Under the threat model we describe in \S\ref{assumptions}, we examine four different attack scenarios wherein malicious applications attempt to circumvent privacy protections. 


\noindent
\textit{Attack 1: Sharing Unnoised Data.} A malicious application attempts to transmit any transformation of the training data (including the gradients) without noising them.
\X's tagging module automatically marks the results of all computations performed using the accelerator as sensitive (tag = 0) and this data is not allowed to be shared out of the device. 

\noindent
\textit{Attack 2: Bypassing Gradient Clipping.} The application attempts to share the noised gradients generated using \texttt{add-noise} instruction and executes \texttt{audit} but skips gradient clipping. \X's noising module automatically calculates the $\ell_2$-norm during the \texttt{add-noise} operation and tags the results with the current \emph{epoch}. The $\ell_2$-norm of the gradients being noised exceeds the $C_{th}$ threshold as they are unclipped, the \texttt{audit} instruction updates the $CStatus$ register (tracks the first training iteration where the checks for the audit instruction failed) with the current epoch value. When the application tries to share this noised data, the PMA detects that the tag values are equal to the value of the $CStatus$ register. This violates the requirement that tags be strictly less than the value of the $CStatus$ register, and thus the PMA will terminate the application. The application may also attempt to bypass gradient clipping by not executing the \texttt{audit} instruction. The noised gradients generated using the \texttt{add-noise} instruction are tagged with the current \emph{epoch}. However, the \emph{epoch} is not incremented at the end of training round (\texttt{audit} instruction was not executed). When the application attempts to share these noised gradients out of the device, the tag values will not be less than the current \emph{epoch} and thus the PMA will terminate the application. 

\noindent
\textit{Attack 3: Exceeding Privacy Budget.} The application correctly clips and noises gradients but executes more training iterations than allowed by the privacy budget. Each \texttt{add-noise} operation tags results with the current \emph{epoch} value, which increments with each \texttt{audit} execution. When the application attempts to share final gradients, the PMA calculates total privacy cost from tag values, and the PMA terminates the application as the computed cost will exceed the budget.

\noindent
\textit{\Xe-Specific Attack: Bypassing sampling.} In \Xe, applications may attempt to reuse intermediate results, such as feature vectors, from previous iterations to bypass sampling mechanisms. The intermediate results computed in previous iterations have been tagged with the value of the \emph{epoch} register (which was incremented at the end of the iteration) in the $depoch$ field. When this data is loaded in the subsequent iteration, the Memory Tagging Unit (MTU) compares the $depoch$ field of the tags with the current value of \emph{epoch} register, and the MTU would detect the mismatch in the fields. The $CStatus$ register will be updated with the \emph{epoch} value, which causes the PMA to terminate the application when the application attempts to share this data out of the device. 

\noindent
\textit{Noising and sharing data that is not the gradients.} Our framework provides DP guarantees to arbitrary results derived from sensitive training data. When applications apply \X instructions to transform the training data, the resulting noised data still satisfies DP \ie the sensitivity of the mathematical function representing the transformation is bounded by $C_{th}$, and noise was sampled according to Eq. \ref{equ_ldp}. Thus, sharing this data does not result in any privacy violations.

\noindent \textbf{Formal Proof:} We include a formal proof for \X. Proofs for \Xe, and for privacy budgeting can be constructed analogously. Let $f: \mathcal{D} \rightarrow \mathbb{R}^n$, $n\in \mathbb{Z}^{+}$ be a vector-valued function computed by an application running on an accelerator with \X, where $\mathcal{D}$ is the space of datasets, $\mathbb{R}$ is the set of real numbers and $\mathbb{Z}^{+}$ is the set of positive integers. We show that \emph{regardless of the computation} implemented by the application (\circled{c} in Fig. \ref{fig:key_mod}), \X theoretically guarantees $(\epsilon, \delta)$-DP. For any data identified as safe-to-share, the application has successfully executed $noise\_l2$ (\circled{a} in Fig. \ref{fig:key_mod}) for \X \ie it is always the result of $\mathcal{M}(x)$ as defined below. 
\begin{spacing}{0.9}
{
\vspace{-1mm}
\begin{theorem}
Given that data shared out of the device for \X is a result of $\mathcal{M}: \mathcal{D} \rightarrow \mathbb{R}^n$, then it satisfies $(\epsilon, \delta)$-DP when $\mathcal{M}$ is a standard Gaussian mechanism~\cite{gauss_dp} such that
\begin{equation}
\begin{aligned}
\mathcal{M}(x) = f(x) + \mathcal{N}(0, 4C_{th}^2z^2I_n) \textrm{, } \\ 
\textrm{ where } z = \sqrt{2\log(1.25/\delta)} / \epsilon   .
\end{aligned}
\label{m1}
\end{equation}
\end{theorem}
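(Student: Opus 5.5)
The plan is to reduce the statement to the standard Gaussian mechanism guarantee recalled in Eq.~(\ref{equ_noising}). That guarantee says that releasing $F(x)+\mathcal{N}(0,\sigma^2 I)$ is $(\epsilon,\delta)$-DP whenever $\sigma \ge S\sqrt{2\log(1.25/\delta)}/\epsilon$, where $S$ is the $\ell_2$-sensitivity of $F$. Since $\mathcal{M}$ in Eq.~(\ref{m1}) has noise standard deviation $2C_{th}z = 2C_{th}\sqrt{2\log(1.25/\delta)}/\epsilon$, it suffices to show that the function whose output is actually noised has $\ell_2$-sensitivity at most $S=2C_{th}$. Here I will need to be careful that $f$ denotes the \emph{effective} function seen by the noising module, and not whatever arbitrary computation the application performed before it.

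First, I formalize this effective function from the hardware conditions. By the tagging argument (Attack 1), any data marked safe was produced by \texttt{add-noise}, possibly followed by \texttt{vadd} aggregation of noised data, which is post-processing. So the released vector is $f(x)+\eta$, where $f(x)$ is the concatenation of all operands fed to \texttt{add-noise} in that epoch, and $\eta$ is fresh Gaussian noise supplied by the trusted PMA (by assumption). Next, the clipping condition: data from epoch $e$ is released only if the tag is less than $CStatus$. That means the \texttt{audit} check $\sqrt{P_{agg}}\le C_{th}$ passed for that epoch, where $P_{agg}$ is the sum of squares of exactly those operands. The control-flow check also ensures that which data is noised cannot depend on sensitive values. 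Hence $\lVert f(x)\rVert_2\le C_{th}$ holds for every dataset $x$ on which the release happens.

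Then the sensitivity bound follows from the triangle inequality. For adjacent $D,D'$, $\lVert f(D)-f(D')\rVert_2 \le \lVert f(D)\rVert_2+\lVert f(D')\rVert_2 \le 2C_{th}$, regardless of what $f$ computes internally. Plugging $S=2C_{th}$ into Eq.~(\ref{equ_noising}) gives $\sigma = 2C_{th}z$, which matches Eq.~(\ref{m1}), so $\mathcal{M}$ is $(\epsilon,\delta)$-DP. Robustness to post-processing then covers the \texttt{vadd} aggregation, and composition (\S\ref{background}) combined with the epoch counter covers multiple rounds and the budget.

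The main obstacle is the second step: justifying that the norm bound holds for \emph{all} datasets, not just the realized one. The difficulty is that the application could make the noised function data-dependent in a way that sometimes violates the bound, and the release decision itself could leak information. I would handle this by defining $\tilde f(x)$ as $f(x)$ when the audit passes and $0$ otherwise. This $\tilde f$ always has norm at most $C_{th}$, so the argument above applies to it. The abort/terminate decision then needs separate treatment, either by arguing it is folded into the $\delta$ term or by treating the PMA's decision as a function of $\tilde f$; this is the step I expect to need the most care.
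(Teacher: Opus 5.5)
Your proof is the paper's: you invoke the Gaussian mechanism with $\ell_2$-sensitivity $2C_{th}$, obtained from the hardware-enforced bound $\lVert f(D)\rVert_2\le C_{th}$ via the triangle inequality, so the noise scale $2C_{th}z$ matches the theorem; your concatenation-of-operands description of $f$ and the $\tilde f$ device only make explicit what the paper assumes. The abort/termination leak you flag is real but falls outside this theorem, since the paper's proof simply takes the norm bound to hold for every dataset; note, however, that it could not be absorbed into $\delta$, because an application can make the audit fail exactly when a target record is present, so a fix would require the hardware to release noise on $\tilde f$ (e.g.\ zeroed data) rather than terminate.
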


\begin{proof}
By~\cite{gauss_dp}, $\mathcal{M}$ is differentially private if $\max \|f(D) - f(D')\|_2 = 2C_{th}$ for two adjacent datasets $(D,D')\in \mathcal{D}$.
Recall the $noise\_l2$ operation (\circled{a} in Fig. \ref{fig:key_mod}) in \X ensures $\|f(D)\|_2 \leq C_{th}$. Thus, using the triangle inequality, we can calculate the following bound:
\begin{equation}
\begin{aligned}
    \|f(D) - f(D')\|_2 & \leq \|f(D)\|_2 + \|f(D')\|_2 \\
    &\leq C_{th}+C_{th} \leq 2C_{th}
\end{aligned}
    \label{equ_l2}
\end{equation}
Thus, $\mathcal{M}$ satisfies $(\epsilon, \delta)$-DP, which concludes the proof.
\vspace{-0.5em}
\end{proof}
}
\end{spacing}


\noindent
\textbf{Budget allocation:} Determining appropriate budget allocations is needed for long-term deployments to ensure that the cumulative privacy cost over time provides sufficient protection. \X’s PMA can be integrated with privacy management systems~\cite{cohere,kube} to budget across multiple applications/mechanisms (including potential collusion). When combining different DP techniques for the same data (\eg raw data noising plus DP-FL), and when collusion among multiple aggregators using the same dataset is assumed, the PMA can calculate the privacy cost using composition~\cite{reyni,dp}.

\noindent
\textbf{Generalizability of \X}: 
\X and \Xe together support all LDP FL+DP algorithms that clip gradients by $\ell_2$ norm and add Gaussian noise~\cite{nbafl,fedavg,dpadapt,fl_dp,llm-dp1,llm-dp2,yang2023dynamic,cheng2022differentially}. \X can be trivially extended for LDP schemes which use other noise distributions~\cite{ldp-fl,ldp-fed,shen2023pldpfl}, since they follow the same two steps: clip gradients, then add noise. For example, supporting LDP-Fed~\cite{ldp-fed}, which clips and noises individual gradient elements, only requires modifying the host CPU’s noise-sampling software and the noising unit.

\noindent
\textbf{Limitations:} \this only tracks computation on the ML accelerator, and thus no guarantee is provided regarding any computation performed outside \X (\eg in the CPU). \X is applicable to ML accelerators that are built according to the framework depicted in Fig. \ref{fig:overview}. This includes most systolic array-based accelerators~\cite{tpu,tpuv2,tpuv4,gaudi,genesys} but does not include GPUs where the dataflow and caches cannot be tracked using the \X framework.

\def\com{\texttt{ComB}\xspace}
\def\mem{\texttt{MemB}\xspace}

\section{Methodology}

\textbf{Evaluation Methodology:}
We use a custom cycle-accurate simulator for ML workloads, which is based on the open-source SCALE-sim simulator~\cite{scale_sim} integrated with Ramulator v2.0~\cite{ramulator2}. We model on-chip buffers, double buffering for efficient data transfers, systolic array, and all \X modules. The simulator was validated against Google Cloud TPUv3~\cite{tpuv4} for a wide range of shapes/sizes for GEMM, vector addition, and noising operators (Pearson correlation coefficient \textgreater 0.92). We 
evaluate \X and \Xe with a tag size of 1 and 4 bytes, respectively. \Xe can also be implemented with a tag size of 2 bytes (for batch sizes up to $254$).   
We evaluate \X and \Xe on four baselines that implement DP-enabled ML training but do not provide DP guarantees in hardware: (i) \texttt{TPU}~\cite{tpuv2}: models the Google TPUv3~\cite{tpuv2} with weight-stationary dataflow. (ii) \texttt{DIVA}: that proposes an outer-product based dataflow~\cite{diva} (iii) \texttt{DIVA-PPU}: that also uses a post-processing unit (PPU) with hardware adder trees to accelerate $\ell_2$-norm computation for the per-example gradients~\cite{diva}. The PPU eliminates the need to fetch computed gradients from memory after they have been computed to compute the $\ell_2$-norm. (iv) \texttt{OS}: an accelerator with output-stationary dataflow with same SRAM parameters as TPU for a fair comparison. The systolic array on \texttt{OS} is based on Shidiannao~\cite{shidia}. We evaluate all accelerators using a 128×128 systolic array running at 1 GHz, 24 MB of on-chip SRAM and assume two accelerators on-chip connected to HBMv2 memory, as in TPUv3~\cite{tpuv2} \ie the memory subsystem provides 450 GB/s bandwidth for each accelerator.

\noindent
\textbf{Benchmarks:}
State-of-the-art DP-based FL algorithms~\cite{nbafl,ldp-fed,shuffledp,fl_dp} are evaluated on image recognition tasks. Thus, we evaluate a range of CNN models: VGG16 (\texttt{vgg})~\cite{vgg16}, ResNet50 (\texttt{r50})~\cite{resnet}, ResNet152 (\texttt{r152})~\cite{resnet}, AlexNet (\texttt{alx})~\cite{alex}, Yolov3 (\texttt{ylo})~\cite{yolo},
GoogleNet (\texttt{goo})~\cite{gnet},  SqueezeNet (\texttt{sqz})~\cite{squeeze}, and MobileNetV2 (\texttt{mob})~\cite{mobilenet}. We also evaluate two transformer models~\cite{bert}, BERT-base (\texttt{brtb}) and BERT-large (\texttt{brtl}), for next-sentence prediction tasks. We use ImageNet~\cite{imagenet} dataset for CNN models and BooksCorpus~\cite{bookscorpus} dataset for transformer models. We evaluate \X and \Xe for one local training round consisting of 10 and 1 iteration(s), respectively with a batch size of 8.

\noindent\textbf{RTL implementation:} We implement the \X and \Xe modules 
in Bluespec System Verilog and use openroad 
for 
area and power overheads. With 
CACTI~\cite{cacti} we 
model the area and power overheads of the SRAM buffers used to store the tags in the accelerators.

\section{Evaluation}
\label{evaluation}

\subsection{Performance Overheads}

Fig. \ref{fig:results_dw} and Fig. \ref{fig:results_dwe} show the slowdown of \X and \Xe, for various workloads during local training, over the respective baseline accelerators. 
We observe that \X incurs only 0.09\%, 0.20\%, 0.18\%, and 0.13\% performance slowdown on average for TPU, DIVA, DIVA-PPU, and OS as a result of the additional memory requests required. \Xe incurs on average 0.31\%, 0.5\%, 0.55\%, and 0.37\%  performance slowdown for TPU, DIVA, DIVA-PPU, and OS, respectively. The small performance impact is because \X and \Xe generate little additional memory traffic: $\sim$0.08\% and $\sim$0.61\% for OS and $\sim$0.12\% and $\sim$0.8\% for the others. 



\begin{figure}[tb]
    \centering
    \begin{subfigure}{\linewidth}
        \centering
        \begin{adjustbox}{max totalsize={\linewidth}{\textheight}}
            \includegraphics[clip]{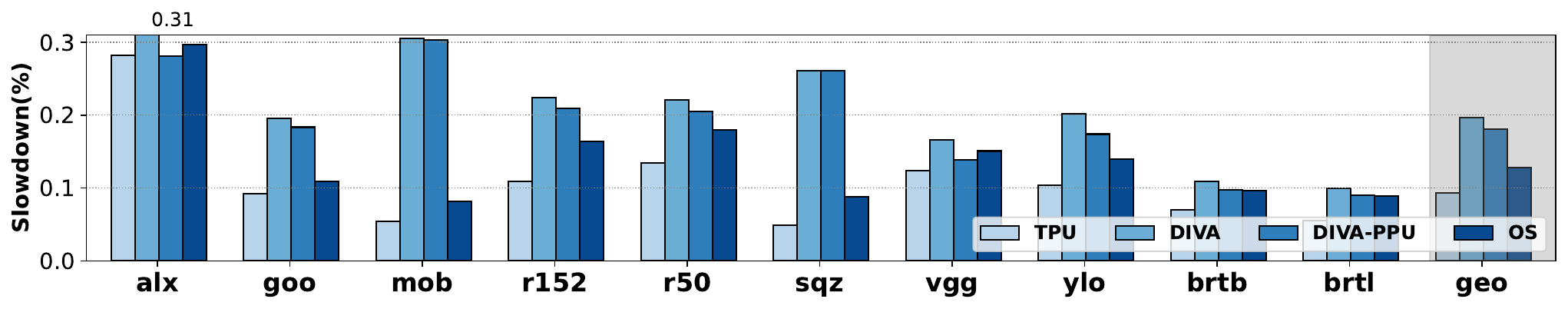}
        \end{adjustbox}
        \vspace{-0.5cm}
        \caption{\X}
        \label{fig:results_dw}
    \end{subfigure}
    \begin{subfigure}{\linewidth}
        \centering
        \begin{adjustbox}{max totalsize={\linewidth}{\textheight}}
            \includegraphics[clip]{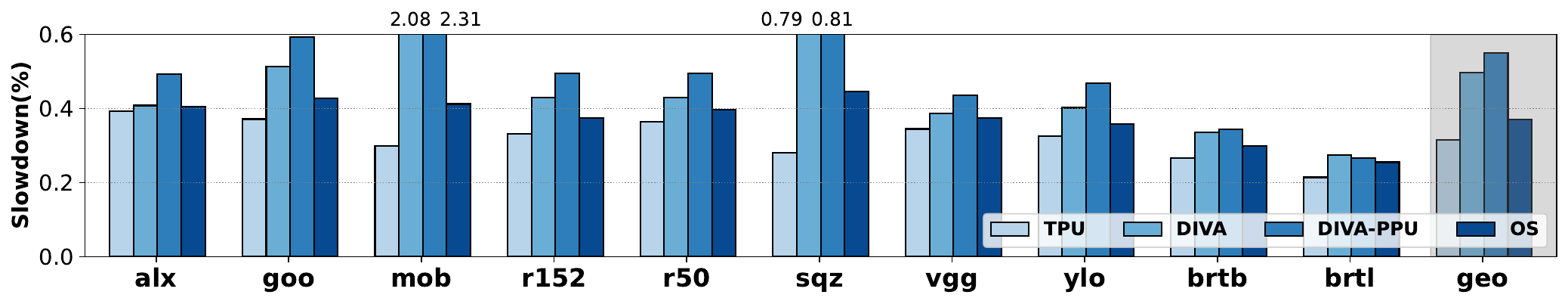}
        \end{adjustbox}
        \vspace{-0.5cm}
        \caption{\Xe}
        \label{fig:results_dwe}
    \end{subfigure}
    \vspace{-0.5cm}
    \caption{Slowdowns for \X and \Xe across various workloads over TPU, DIVA, DIVA-PPU, and OS.}
    \vspace{-0.7cm}
        \label{fig:results_combined}
\end{figure}

Fig. \ref{fig:results_mem_dw} left presents the performance breakdown of \X for the two types of kernels: 1) \com: matrix multiplication kernels in the forward and backward passes, and 2) \mem: kernels for model update and noising. Fig. \ref{fig:results_mem_dw} right shows the breakdown of additional memory requests issued in the two types of kernels. Fig. \ref{fig:results_mem_dwe} shows the breakdown for the two types of kernels (\com, \mem) in terms of execution time (left) and additional memory traffic of \Xe (right) for various baseline accelerators. The per-example gradient computation, clipping, and noising kernels are included in \mem for \Xe. 
We make five observations.

First, the slowdown on TPU is less than the slowdown on DIVA and DIVA-PPU for both \X and \Xe. For example, \X has an average slowdown of 0.09\% and 0.2\%, while \Xe has a slowdown of 0.31\% and 0.55\% on TPU and DIVA. This is because the additional memory requests for \X and \Xe are overlapped with the execution of the \com kernels. The fraction of time spent executing \com kernels is higher on TPU than in DIVA or DIVA-PPU, while the fraction of the additional memory traffic for the \com kernels remains similar. For example, for \texttt{mob} in \X the \com kernels take up 95\% and 80\% of the execution time on TPU and DIVA, while ~97\% of the additional memory requests are issued for \com kernels. Thus, the latency costs of the additional memory operations of \X are hidden better on TPUs, resulting in lower overall slowdowns.
The slowdown of \X and \Xe in OS is less than that in DIVA and DIVA-PPU for the same reason.

Second, the slowdown in OS is more than TPU for both \X and \Xe: \X averages 0.09\% (TPU) vs. 0.13\% (OS), and \Xe 0.31\% vs. 0.37\%. This is due to fewer \com kernel stalls in the OS baseline than in TPU~\cite{diva}, \ie the execution time of \com is lower in OS vs TPU (e.g., r50 on \X: 36\% vs. 41\%). Thus, despite lower additional memory traffic in OS (0.08\% vs. TPU’s 0.12\%), the memory latency for \X/\Xe requests adds additional systolic-array stalls in OS.

\begin{figure}[tb]
    \centering
    \begin{subfigure}{\linewidth}
        \centering
        \begin{adjustbox}{max totalsize={\linewidth}{\textheight}}
            \includegraphics[clip,scale=0.94]{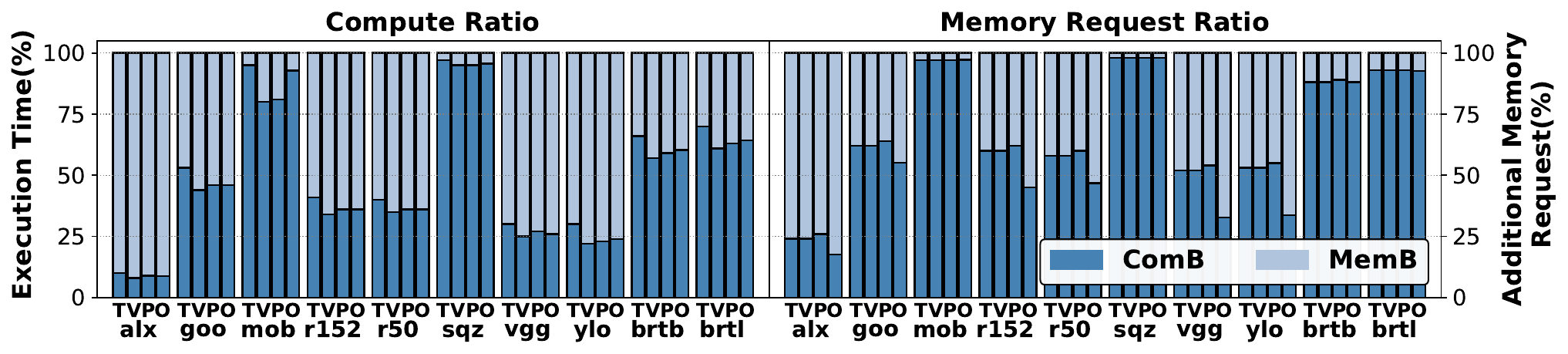}
        \end{adjustbox}
        \caption{\X}
        \label{fig:results_mem_dw}
    \end{subfigure}
    \begin{subfigure}{\linewidth}
        \centering
        \begin{adjustbox}{max totalsize={\linewidth}{\textheight}}
            \includegraphics[]{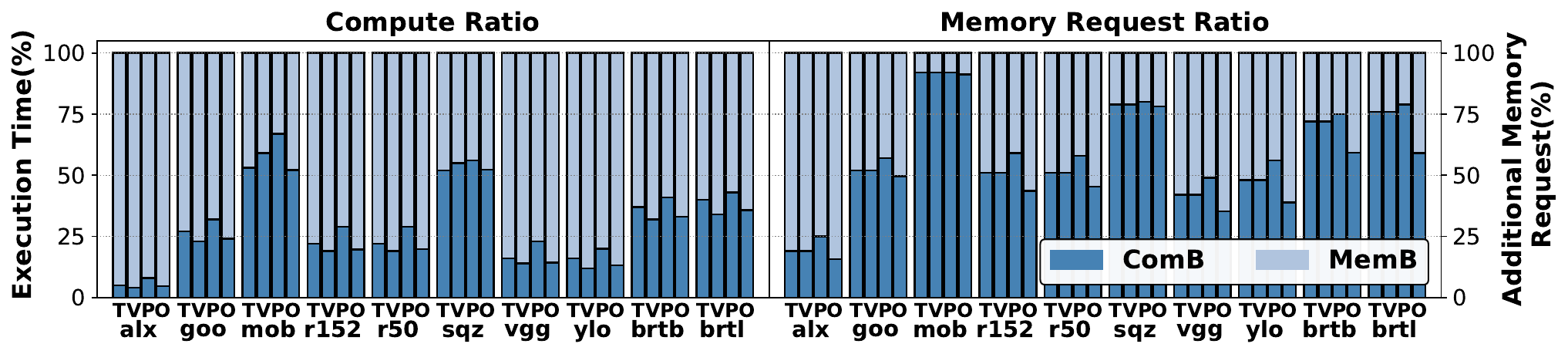}
        \end{adjustbox}
            \caption{\Xe}
        \label{fig:results_mem_dwe}
    \end{subfigure}
        \vspace{-0.5cm}
    \caption{Breakdown of performance and additional memory traffic on TPU(\texttt{T}), DIVA (\texttt{V}), DIVA-PPU(\texttt{P}), \& OS(\texttt{O}).}
    \label{fig:results_mem_combined}
    \vspace{-0.65cm}
\end{figure}

        

Third, \X causes 0.20\% slowdown 
on DIVA, which is 
higher than on DIVA-PPU (0.18\%). The 
difference between DIVA and DIVA-PPU is the presence of the PPU in the latter, which accelerates $\ell_2$-norm computation. However, the acceleration of $\ell_2$-norm computation reduces the portion of execution time spent on \mem kernels, but the ratio of memory requests issued for each type of kernel is similar. 
Additional memory traffic 
of \X for DIVA and DIVA-PPU is also similar (0.12\%). Thus, on DIVA-PPU, a larger fraction of the memory requests of \X are overlapped with systolic array execution, resulting in a lower overall performance impact.

Fourth, \Xe causes a slowdown of 0.5\% on DIVA, while it causes 0.55\%
slowdown on DIVA-PPU. This is because the additional memory traffic of \Xe is higher for DIVA-PPU (0.89\%) in comparison to DIVA (0.77\%). 

Fifth, \X and \Xe cause high slowdowns for \texttt{mob} and  \texttt{sqz} on DIVA, even though the ratio of execution time for \com is high. For example, \Xe causes a slowdown of 2.08\% for \texttt{mob}, which is larger than the slowdowns of other models, even though the \com kernels take 60\% of the execution time. This is because these models have very small sizes for the matrix multiplication kernels, which are efficiently executed on the systolic array. Thus, the additional memory requests of \Xe introduce additional stalls (5.4\% and 1.1\%), resulting in higher slowdowns. 

Overall, we conclude that \X and \Xe are efficient approaches to provide DP guarantees for per-batch clipping and per-example clipping algorithms, respectively.

\subsection{Sensitivity Studies}
\label{sensitivity}
\noindent
\textbf{Tag size:} We present the slowdown of \Xe over the DIVA-PPU baseline for one training iteration when the varying tag sizes in Fig. \ref{fig:results_tag_sz}. The tag size denotes the number of tag bytes per 512 bytes of data, achieved by increasing number of tags, size of each tag, or both.
We observe that the slowdown increases as tag size increases, because the additional memory accesses required to fetch tags increase, causing larger stalls in the accelerator. However, a tag size of four would be the maximum needed for \X and \Xe.

\noindent
\textbf{Batch size:} Fig. \ref{fig:results_bw_sz} shows \X and \Xe performance slowdown  over the baseline for 1 training round 
varying the batch size on DIVA-PPU. 
The slowdown due to \X reduces, as batch size increases, but remains same \Xe. For example, \X incurs a slowdown of 0.18\%, 0.15\%, and 0.13\% for batch sizes 8, 16, and 32, while \Xe incurs 0.55\% for all 
batch sizes. This is because the ratio of the execution time for \com kernels increases for \X, while it remains the same for \Xe, as batch sizes increase. The latency needed to perform the memory operations of \X is overlapped with the 
\com kernels execution, resulting in a lower overall impact on slowdown. 

\begin{figure}[tb]
\begin{adjustbox}{max totalsize={\linewidth}{\textheight}}
    \centering
    \includegraphics[trim={2mm 2.5mm 2mm 2.5mm},clip]{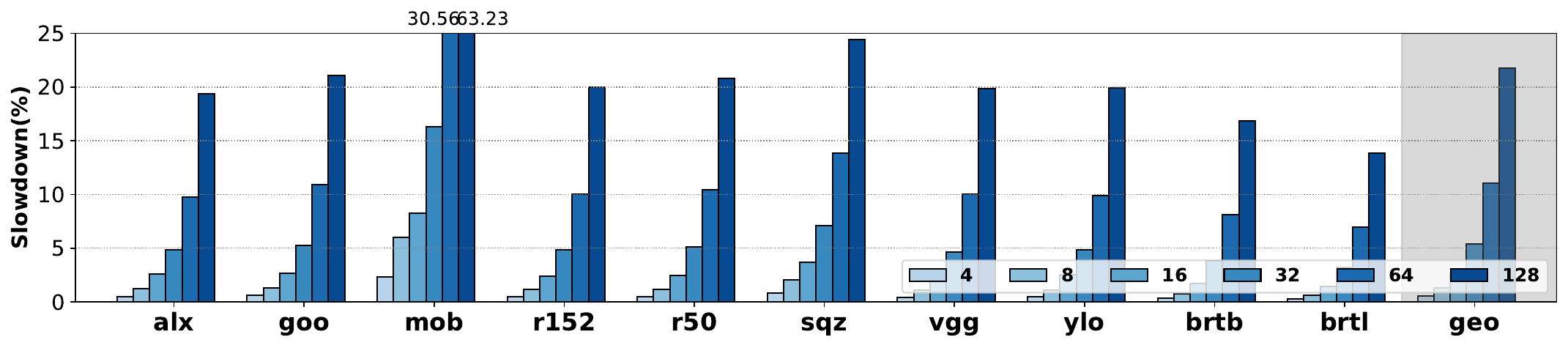}
\end{adjustbox}
    \vspace{-0.6cm}
    \caption{\Xe slowdown across tag sizes on DIVA-PPU.}
    \label{fig:results_tag_sz}
    \vspace{-0.5cm}
\end{figure}

\begin{figure}[tb]
\begin{adjustbox}{max totalsize={\linewidth}{\textheight}}
    \centering
    \includegraphics[]{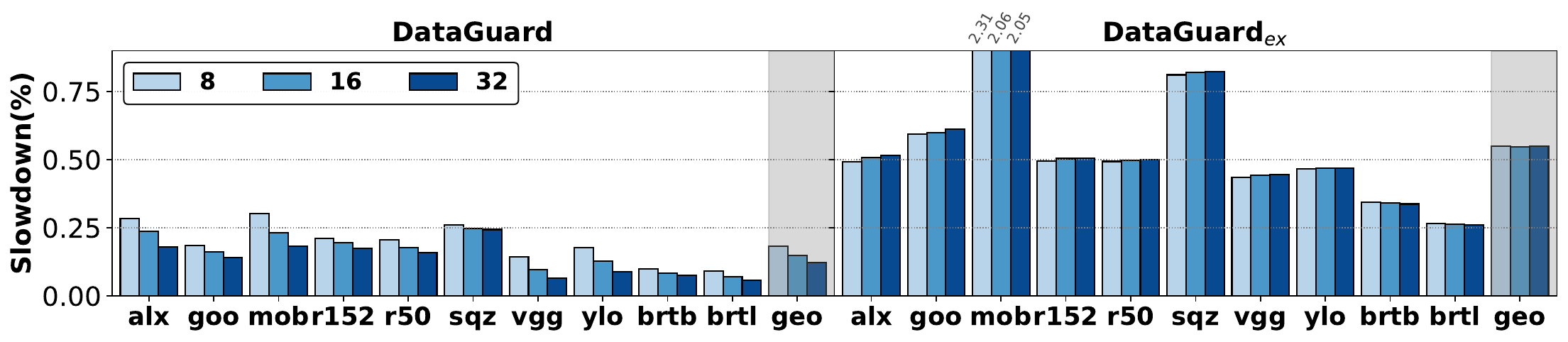}
\end{adjustbox}
\vspace{-0.5cm}
    \caption{Slowdown for various batch sizes on DIVA-PPU.}
    \vspace{-0.7cm}
    \label{fig:results_bw_sz}
\end{figure}
\vspace{-5pt}

\begin{table}[t]
\centering
\caption{ISA additions/semantic extensions introduced by \X/\Xe.}
\label{tab:dwatch_instructions}
\small
\setlength{\tabcolsep}{3pt}
\renewcommand{\arraystretch}{1.05}
\begin{adjustbox}{max totalsize={\linewidth}{\textheight}}
\begin{tabular}{@{}l l p{0.46\columnwidth}@{}}
\hline
\textbf{Instruction} & \textbf{Applies to} & \textbf{Operation} \\
\hline
\texttt{add-noise}        & \X, \Xe & Add Gaussian noise to operands. \\
\texttt{audit}            & \X, \Xe & Verify clipping condition. \\
\texttt{load-tagged}      & \X      & Load data from memory along with tags. \\
\texttt{vadd}  & \X      & Vector add (with tag propagation). \\
\texttt{load-record}      & \Xe     & Load a subsampled training record. \\
\texttt{acc-grad}         & \Xe     & Add two per-example gradients. \\
\hline
\end{tabular}
\end{adjustbox}
\vspace{-0.65cm}
\end{table}

\subsection{Hardware Overheads}
The main overheads are the additional SRAM buffers that store tags and the logic to implement the noising unit. For a TPUv3, whose chip-wide area is 
$\sim$650 $mm^2$ (on 12 nm) and each chip contains two accelerators~\cite{tpuv2},  we add a total of 32KB and 128KB SRAM buffers 
per accelerator for \X and \Xe, respectively. We incur an additional chip-wide area overhead of 0.01\% and 0.05\% (22 nm in CACTI) for tag storage in for \X and \Xe, respectively. The noising unit 
has an area of 0.003 $mm^2$ (
using 7 nm technology~\cite{asap7} targeting a frequency of 1 GHz) per accelerator and costs only an 
extra 0.0008\%  area overhead chip-wide. 

\X and \Xe introduce a set of new accelerator instructions (listed in Table \ref{tab:dwatch_instructions}). Although they require changes to the vector processor, these changes remain intentionally localized and lightweight. The new opcodes extend the vector-processor ISA (\eg \texttt{add-noise} mirrors a standard vector add) and require only modest datapath modifications, namely, dedicated noising and tagging modules. All other changes are confined to metadata handling: small SRAMs for tags, and an MTU that moves tag bytes along with normal data transfers. Consequently, the chip-wide area overhead is merely $\sim$ 0.01\% and $\sim$0.05\% for \X and \Xe, respectively.

\X and \Xe also incur a memory overhead of 64 MB and 256 MB to store the tags for TPUv3, which has a total memory capacity of 32GB~\cite{tpuv2}. The energy overhead of performing tag fetches/writes is $0.08\,\%$ for \X and $0.61\,\%$ for \Xe on the OS accelerator, and $0.12\,\%$ for \X and $0.80\,\%$ for \Xe on the other accelerators. These values were estimated using memory traffic from simulation, assuming energy access for HBMv2 is $3.6\,\mathrm{pJ/bit}$~\cite{energy}.

The noising module adds 0.12 Watts per accelerator causing a chip-wide increase in power consumption of 0.05\% when running at 1 GHz, \ie TDP of TPUv3 is 450 Watt~\cite{tpuv2}. We estimate the static and dynamic power of the additional SRAM modules. 
The static power, obtained from CACTI, is 0.02 Watts and 0.04 Watts per accelerator for \X and \Xe, respectively. The dynamic power was estimated to be 0.0105 Watts and 0.1024 Watts per accelerator for \X and \Xe, respectively. Thus, the total power overhead for \X and \Xe is 0.07\% and 0.1\%.

\vspace{-4pt}
\section{Prior Work}
\label{prior_work}
To our knowledge, \this is the first 
hardware-based framework to provide DP guarantees during ML training on accelerators. We briefly discuss prior work.

\noindent
\textbf{Privacy Guarantees in Hardware:}
Prior works~\cite{simha_cal, serandip, ulp_ldp} propose approaches to noise raw data before the application is granted access to it by modifying the noising mechanisms~\cite{serandip} or designing dedicated hardware modules~\cite{simha_cal,ulp_ldp} for noising. These approaches are limited to certain kinds of sensor data (\eg accelerometers) and were not designed for text, images, or videos. Directly noising training data can also lead to significant accuracy loss in ML training~\cite{dpfyml}. 

\noindent
\textbf{Accelerators for DP-enabled Training:}
Prior works propose accelerators for DP-based ML: DIVA~\cite{diva} is a novel dataflow architecture with a reduction module that accelerates the per-example gradient computations. DINAR~\cite{dinar} proposes a novel noise generation algorithm that relies on pre-generated seeds to generate noise. DPAcc~\cite{dpacc} offloads DP operations to an FPGA placed in the memory path, accelerating DP training by performing these operations in-line during data transfers. These works are orthogonal; they improve performance but require trusting a third-party application for DP guarantees. 

\noindent
\textbf{Software Mechanisms for Protecting User Privacy:}
Prior  works~\cite{taintart,taintdroid} proposed software IFT techniques to detect leakage of sensitive information. These works rely on a privileged software layer running between the application and hardware to instrument tracking and detection. However, these works are not designed for ML applications where gradients computed from sensitive data are sent out of the device. These techniques also rely on a trusted software layer, which increases the attack surface of the system; a software bug can allow malicious applications to leak data. 
Such techniques also incur significant performance overheads (\eg up to 15\%~\cite{taintart}) due to instruction execution overhead for IFT, and additional memory traffic for tag metadata. \X is designed to minimize these overheads for DP. Several prior works~\cite{pinq,gupt,fuzzi,diffprivlib,lightdp,dfuzz,dduo,chorus,solo,ektelo,duet,hoare} propose mechanisms that automatically analyze a program and ensure that it is DP. These works are implemented as libraries~\cite{diffprivlib,ektelo,dduo} or define custom programming languages~\cite{solo,hoare,lightdp}. However, these works only provide guarantees if the program uses specific libraries. It is challenging to ensure that an application is correctly using them without manual analysis by an expert. 

\noindent
\textbf{Ensuring privacy with an untrusted aggregator:} Prior work~\cite{takagi2025securingprivatefederatedlearning} proposes using trusted execution environments (TEEs) on the aggregation server to enforce a differentially private aggregation protocol and enable secure aggregation. In this design, aggregation functions execute inside enclaves, and clients verify the integrity of the server’s code via remote attestation. However, this approach assumes an adversary other than the aggregator or the client. It does not consider the case where the aggregator itself is malicious and controls both the aggregation server and the training applications running on client devices, as in real-world FL deployments~\cite{google_fl}. In such settings, a server-side TEE solution fails to protect client privacy: a malicious aggregator can simply modify the training application to bypass remote attestation checks, violating DP guarantees. In contrast, our threat model assumes no trust in the application running on the client devices.
\section{Conclusion}


We propose \this and \Xe, the first work that enables DP guarantees in hardware for FL applications on ML accelerators. With \X, any data that is shared out of the device satisfies DP guarantees (defined by data owner) regardless of the operations performed by the application. \X requires minimal changes to the accelerator and incurs small performance and area overheads.



\section*{AI Use Disclaimer}
ChatGPT and Claude were used in the preparation of this manuscript for editing and grammar checking. No passages were copied without full author review, and all substantive ideas, analyses, and conclusions are the product and responsibility of the authors.


\bibliographystyle{IEEEtranS}
\bibliography{bibliography}
\widowpenalty=1000
\clearpage

\end{document}